\let\proof\@undefined
\let\endproof\@undefined
\newtheorem{theorem}{Theorem}
\newtheorem{cor}{Corollary}
\DeclareMathOperator*{\argmax}{arg\,max}
\newcommand{\mmax}{M_{\text{max}}}  
\newcommand{\isafe}{I_{\text{safeguard}}}  
\newcommand{\raa}{N_{\text{AA}}}
\newcommand{\naa}{n_{\text{AA}}}
\newcommand{\ma}{\mathcal{A}}
\newcommand{\ms}{\mathcal{S}}
\title{\LARGE \bf
On Anderson Acceleration for Partially Observable\\ Markov Decision Processes \thanks{This work was supported in part by  the Creative-Pioneering Researchers Program through SNU, the National Research Foundation of Korea funded by MSIT(2020R1C1C1009766), and the Information and Communications Technology Planning and Evaluation (IITP) grant funded by MSIT(2020-0-00857).}
}
\author{Melike Ermis
	\and
	Mingyu Park
		\and
	Insoon Yang 
	\thanks{
		M. Ermis, M. Park and I. Yang are with the Department of Electrical and Computer Engineering, Automation and Systems Research Institute,  Seoul National University, Seoul 08826, South Korea, 
		{\tt\small \{melikermis, pmg1202, insoonyang\}@snu.ac.kr}}}
\begin{document}

\maketitle
\thispagestyle{empty}
\pagestyle{empty}


\begin{abstract}
This paper proposes an accelerated method for approximately solving partially observable Markov decision process (POMDP) problems offline.
Our method carefully combines two existing tools: \emph{Anderson acceleration} (AA) and the \emph{fast informed bound} (FIB) method.
Adopting AA, our method rapidly solves an approximate Bellman equation with an efficient combination of previous solution estimates. 
Furthermore, the use of FIB alleviates the scalability issue inherent in POMDPs.
 We show the convergence of the overall algorithm to the suboptimal solution obtained by FIB.
 We further consider a simulation-based method
 and  
 prove that the approximation error is bounded explicitly. 
The performance of our  algorithm is evaluated on several benchmark  problems. 
The results of our experiments demonstrate that the proposed algorithm converges significantly faster without degrading the quality of the solution  compared to its standard counterpart. 
\end{abstract}


\section{Introduction}\label{sec:intro}
	Designing decision-making strategies  in uncertain environments is often challenging, especially under partial observability~\cite{Astrom1965}. 
Various systems have two primary sources of uncertainty: system dynamics and observations obtained via sensors~\cite{Thrun2005}. 
Thus, the successful operation of such systems is subject to 
addressing inefficiency caused by possibly uncertain movements and noisy observations. 

POMDPs have been widely recognized as a principled framework for sequential decision-making in uncertain environments~\cite{smallwood1973optimal}.
In the standard POMDP model, an agent acts in a discrete state space with discrete action and observation spaces. 
The POMDP framework takes uncertainty into account by using $(i)$ a state-transition model, and $(ii)$ a sensor model to specify noisy observations.
In POMDPs, the set of agent's possible states is represented using a probability distribution, called the \emph{belief}, over the state space. 
It is well known that the belief state is a sufficient statistic for POMDPs.

Unfortunately, it is computationally challenging to find an optimal policy of POMDPs~\cite{papadimitriou1987complexity}. The  intractability prevents POMDPs from being successfully applied in practice.
To alleviate the computational issues, several approximate POMDP algorithms (e.g.,~\cite{8594418, 8202146}) have been proposed in the literature that can be categorized as online or offline\cite{kochenderfer2015decision}. 
In the case of online planners, the best action to take is computed at the execution time. They can handle fairly high dimensional state spaces but require more computational power for real-time execution. 
In contrast, offline planners design an approximately optimal policy prior to execution. 
They cannot handle large state spaces, but they require a less online computational burden  since the execution in the environment becomes a sequence of calls to a look-up table. 
 
Some of the state-of-the-art offline algorithms, which plan in the space of belief states, such as SARSOP\cite{kurniawati2008sarsop}, point-based value iteration\cite{pineau2003point}, and  Perseus\cite{spaan2005perseus}, approximate value functions. They compute a set of ``{$\alpha$-vectors}"  to approximately represent the value function. These belief-space POMDP algorithms can provide high-performance policies. 
However, these methods are computationally demanding due to the significant increase in the number of $\alpha$-vectors. 
In contrast, state-space POMDP methods, such as Q-function approximation~\cite{littman1995learning} and FIB~\cite{hauskrecht2000value}, approximate value functions while keeping the number of $\alpha$-vectors as a constant. 
Therefore, they can alleviate the complexity and memory issues.

 The objective of this work is to develop an accelerated offline state-space  POMDP algorithm to alleviate the issue of slow convergence.
Our method uses Anderson acceleration (AA) to speed up the solution of  fixed-point problems~\cite{anderson1965iterative}. 
While the standard fixed-point iteration repeatedly applies an operator of interest to the last estimate, AA searches a linear combination of previous estimates, which minimizes the residual of the previous estimates to compute the next estimate. 
There have been attempts to apply either AA or its variants to algorithms for solving Markov decision processes (MDPs) with full observability~\cite{geist2018anderson,zhang2018globally}.  
These methods have been further extended  to deep reinforcement learning in \cite{shi2019regularized, Ermis2020}. 
It has been demonstrated on several problems that AA improves  convergence or learning speed.
However, to the best of our knowledge,  AA has not yet been applied to POMDPs.

The main results of this work can be summarized as follows:
First, we propose a novel accelerated  POMDP algorithm exploiting both AA and FIB.  To speed up convergence, we use AA with an adaptive regularization technique proposed in~\cite{fu2019anderson}. 
For scalability, we adopt FIB, which is a state-space algorithm computing one $\alpha$-vector for each action. 
To combine the two methods, we interpret FIB as a solution to a fixed-point problem.
Second, we consider a simulation-based implementation of our algorithm  and provide a provable error bound on the residual. This analysis characterizes how the corresponding approximation errors  propagated through AA  are explicitly bounded. We further show the convergence of the error-free version of our algorithm to the suboptimal solution obtained by FIB.
Third, we evaluate the performance of the policies obtained by our algorithm on several benchmark problems. 
The results of our experiments  
indicate that the proposed algorithm makes a significant improvement in the total number of iterations and the total computation time, without degrading the quality of the obtained policy.

\section{Background}

\subsection{Partially Observable Markov Decision Processes}

A partially observable Markov decision process (POMDP) is formally defined as a tuple  $(\mathcal{S}, \mathcal{A}, \mathcal{O}, T, \Omega, r, \gamma)$:
\begin{itemize}
	\item The state space $\mathcal{S}$ is a finite set of all possible states of the environment.
	\item The action space $\mathcal{A}$ is a finite set of all possible actions that the agent can take. 
	\item The observation space $\mathcal{O}$ is a finite set of all possible observations that the agent can receive. 
	\item The transition model $T$  describes the evolution of the environment over time. Given state $s \in \mathcal{S}$ and action $a \in \mathcal{A}$, 
	Specifically,
$T_{s'}^{s,a} = \mathrm{Pr} (s' | s,a)$ is
	the probability of transitioning from the current state $s$ to the next state $s'\in \mathcal{S}$ when taking action $a$.
	\item The observation model $\Omega$ describes the probability of  observing $o\in \mathcal{O}$ in a given state $s'$ after taking action $a$: $\Omega_o^{s',a}= \mathrm{Pr}(o | s',a)$.
	\item The reward function $r$ quantifies the utility of each action for each state: $r : \mathcal{S} \times \mathcal{A} \to [r_{\min}, r_{\max}]$. 
	\item The discount factor $\gamma \in (0,1)$ is a scalar that discounts the value of future rewards. 
\end{itemize}

In a POMDP, events happen in the following order: In the current state $s$, the agent takes an action $a$, and the environment transitions to a state $s'$ according to the transition model $T$. Then, the agent receives an observation $o$ related to $s'$ and $a$ according to the observation model $\Omega$. A widely used objective is to maximize the expected discounted sum of rewards over an infinite horizon, $\mathbb{E}\big[\sum_{t=0}^{\infty} \gamma^t r_t\big]$, where $r_t$ is the reward received at stage $t$.  

Although the state is not directly observed, state probabilities can be computed using Bayes' rule. Let $b(s)$ denote the probability for the environment being in state $s$.
Then, after taking action $a$ and observing $o$, 
the agent computes the probability for the environment being in state $s'$ as
\begin{equation} \label{eq:belief}
b'(s') := \tau (b, a, o) := \eta \Omega_o^{s',a} \sum_{s\in \mathcal{S}}T_{s'}^{s,a}b(s),
\end{equation}
where $\eta$ is the normalizing constant, and $b$ (or $b'$) is called the \emph{belief state} or \emph{belief}.
Since the belief state provides all information about the history  needed for an optimal solution, a POMDP can be recast as an equivalent completely observable MDP over belief states, called a \emph{belief-state} MDP, where 
\[
\mathcal{B} = \bigg \{b\in\mathbb{R}^{|\mathcal{S}|}: \sum_{s\in\mathcal{S}}b(s) = 1 \mbox{ and } b(s)\geq 0, \forall s \in \mathcal{S} \bigg \}.
\]
Using this representation, a belief-state MDP can be solved by dynamic programming (DP).
In the  infinite-horizon  discounted setting, the optimal value function $V^\star$ : $\mathcal{B} \to \mathbb{R}$
is the fixed point of the Bellman operator $\mathcal{T}$ defined as 
\begin{equation} \label{eq:exact-value}
\mathcal{T}V(b) := \max_{a\in \mathcal{A}} \bigg[ R(b,a) + \gamma\sum_{o\in\mathcal{O}} \mathrm{Pr}(o | b,a)V(\tau (b, a, o)) \bigg],
\end{equation}
where $R(b,a) := \sum_{s\in\mathcal{S}}r(s,a)b(s)$ and $\mathrm{Pr}(o | b,a) = \sum_{s\in\mathcal{S}}\sum_{s'\in\mathcal{S}}\Omega_o^{s',a}T_{s'}^{s,a}b(s)$.
The optimal value function can be found by solving the Bellman equation
\[
V = \mathcal{T} V,
\]
which is a fixed-point problem.

One standard algorithm used to find the optimal value function $V^\star$   is \emph{value iteration}. Value iteration starts with an initial guess $V_0$ and approximates $V^\star$ through repeated applications of the Bellman update $V_k  :=  \mathcal{T}V_{k-1}$. During value iteration updates, each $V_k$ is piecewise linear and convex since the Bellman operator preserves the piecewise linearity and convexity~\cite{sondik1971optimal}.
Thus, the value function can be represented using 
a finite set of $|\mathcal{S}|$-dimensional vectors
 $\Gamma_k = \{\alpha^k_1,\ldots,\alpha^k_{|\Gamma_k|}\}$ as
\[
V_k(b) := \max_{\alpha \in \Gamma_k} \sum_{s\in \mathcal{S}} \alpha (s)b(s).
\]
The standard value iteration algorithm makes an exact calculation on $\mathcal{T}$ by updating $\alpha$-vectors from the previous set $\Gamma_{k-1}$ to the current set $\Gamma_k$~\cite{sondik1971optimal,cassandra971,zhang2001speeding}. Unfortunately, in the worst case, the size of this representation grows exponentially as $|\Gamma_k| \leq |\mathcal{A}||\Gamma_{k-1}|^{|\mathcal{O}|}$, which causes computational intractability even for   small size problems. Many approximation techniques have been proposed for handling this issue in exact value iteration, as mentioned in Section~\ref{sec:intro}.  

\subsection{Fast Informed Bound Method}

State-space POMDP algorithms approximate the value function with a finite set $\Gamma_k$ of $\alpha$-vectors by associating each action with one $\alpha$-vector. 
The fast informed bound (FIB) method is a popular offline state-space POMDP algorithm,  proposed in~\cite{hauskrecht2000value}. 
Specifically,  at the $k$th iteration, FIB updates the $\alpha$-vectors  as
\begin{equation}
\label{eq:fib}
\alpha_a^{k+1}(s) := r(s,a) + \gamma\sum_{o\in\mathcal{O}}\max_{a'\in \mathcal{A}}\sum_{s'\in\mathcal{S}} \Omega_o^{s',a}T_{s'}^{s,a}\alpha_{a'}^k(s'),
\end{equation}
where $\alpha_{a} \in\mathbb{R}^{|\mathcal{S}|}$ for all $a \in \mathcal{A}$, instead of using the exact value iteration algorithm.
Note that 
  the size of the set $\Gamma_k := \{\alpha^k_1,\ldots,\alpha^k_{|\mathcal{A}|}\}$ remains the same throughout the update process, thereby alleviating the scalability issue in POMDPs.
It is well-known that the FIB update upper-bounds the exact value iteration update and is tighter than other popular value-function approximation techniques, namely the MDP and QMDP approximation approaches~\cite{Astrom1965, Lovejoy1993, littman1995learning}.
In fact, the FIB update~\eqref{eq:fib} can be derived from the exact Bellman update~\eqref{eq:exact-value} by decoupling the expectation over $b(s)$ from $R(b,a)$ and $\mathrm{Pr}(o | b,a)$.
Thus, the FIB update becomes independent of the belief, similar to the MDP and QMDP approximation.
However, FIB takes uncertainty from observations into account, and thus intuitively the FIB solution is closer to the optimal solution than the approximate solutions obtained by the other methods.

\subsection{Anderson Acceleration}
 
 Consider a function $F: \mathbb{R}^N \to \mathbb{R}^N$, where the associated fixed-point problem can be defined as 
\[
x = F(x).
\]
The fixed-point iteration (FPI) repetitively applies 
\[
x^{k+1} = F(x^k).
\]
These repetitions make FPI algorithms converge slowly. 

To speed up the computation of a fixed point,
 Anderson acceleration (AA) can be used instead of the naive FPIs~\cite{anderson1965iterative}. 
 At the $k$th iteration,
AA  maintains the most recent $M^k + 1$ estimates $(x^k, \ldots, x^{k-M^k})$ in memory. Here, $M^k \in \mathbb{Z}_+$ is the memory size and is given by $M^k =\min(M_{\max}, k)$  with some hyper-parameter $M_{\max}$~\cite{walker2011anderson}. 
Then, the estimate of $x$ is updated as the following weighted sum of $F(x^{k - M^k + i})$'s:
\begin{equation} \label{eq:anderson-update}
x^{k+1} = \sum_{i=0}^{M^k} w_i^k F( x^{k - M^k + i}).
\end{equation}  
Here, the weight vector $w^k = (w_0^k, \ldots, w_{M_k}^{k}) \in \mathbb{R}^{M^k + 1}$ is obtained by solving the following optimization problem:
\begin{equation}\label{eq:anderson-problem}
\begin{split}
\min_{w^k} \quad  &\bigg \| \sum_{i=0}^{M^k} w_i^k G (x^{k - M^k + i}) \bigg \|_2^2\\
\mbox{s.t.}  \quad  &\sum_{i=0}^{M^k} w_i^k =1,
\end{split}
\end{equation}
where
\[
G (x) := x - F(x)
\]
represents the residual. 
Then, the memory is updated to $(x^{k+1}, \ldots, x^{k+1-M^{k+1}})$ before repeating the process. 
The update \eqref{eq:anderson-update} can be interpreted as an extrapolation mechanism to rapidly reduce the residual by using the previous iterates in the memory. 
Several variants of AA have also been studied in the literature~\cite{scieur2016regularized, zhang2018globally, fu2019anderson}.

\section{Accelerated POMDP Algorithm}

 Our goal is to devise an accelerated POMDP algorithm that 
 carefully combines AA and FIB. 
 The key idea of our approach is two-fold: 
 The first is to update the set $\Gamma_{k+1}$ of $\alpha$-vectors 
  not only relying on $\Gamma_{k}$  but also using the previous sets $\Gamma_k, \ldots, \Gamma_{k - M^k}$ in memory.
Second, to exploit AA in finding an efficient linear combination of the previous estimates, we represent the FIB algorithm as a solution to a fixed-point problem. 
The use of FIB enhances the scalability of our method, while AA helps in rapidly finding the fixed point. 

\subsection{Anderson Acceleration for POMDPs}\label{sec:A3FIB}

We first define the fixed-point problem to apply AA in FIB, where the size of the set $\Gamma = \{\alpha_1,\ldots,\alpha_{|\mathcal{A}|}\}$ remains the same throughout the $\alpha$-vector update. 
Let the vectorization of the set $\Gamma$ be defined as
\[
\alpha := \begin{bmatrix}
\alpha_1 \\
\alpha_2 \\
\vdots \\
\alpha_{|\ma|}
\end{bmatrix} \in  \mathbb{R}^{|\ms|   |\ma|},
\]
and  let
$F: \mathbb{R}^{|\ms|   |\ma|} \to \mathbb{R}^{|\ms|   |\ma|}$ denote the approximate Bellman operator for updating the $\alpha$-vector as \eqref{eq:fib}.
More precisely, 
\begin{equation}\label{app_bellman}
\begin{split}
&(F\alpha) ((a-1)|\ms| + s) \\
&:= r(s,a) + \gamma\sum_{o\in\mathcal{O}}\max_{a'\in \mathcal{A}}\sum_{s'\in\mathcal{S}} \Omega_o^{s',a}T_{s'}^{s,a}\alpha_{a'} (s').
\end{split}
\end{equation}
It is straightforward to check that $F$ is a $\gamma$-contraction mapping with respect to $\| \cdot \|_\infty$ since $\mathcal{S}$, $\mathcal{A}$ and $\mathcal{O}$ are finite sets. 
Thus, by the Banach fixed-point theorem, the standard FIB algorithm
\begin{equation}\nonumber
\alpha^{k+1} := F\alpha^k
\end{equation}
converges to the fixed point of $F$.
The corresponding fixed-point  problem $\alpha = F \alpha$ can be regarded as an approximate Bellman equation.

To accelerate the process of finding the fixed point, 
we use AA on the approximate Bellman operator $F$.
 Let the residual function be  defined as
\[
G(\alpha) := \alpha - F\alpha \in \mathbb{R}^{|\mathcal{S}| |\mathcal{A}|}.
\]
Suppose that the estimates of $\alpha$ have been computed up to the $k$th iteration and  consider the most recent $M^k +1$ estimates $(\alpha^k, \ldots, \alpha^{k-M^k} )$. 
Then, our accelerated version of the FIB algorithm updates the next estimate as 
\begin{equation}\label{eq:raa-q}
\alpha^{k+1} := \sum_{j=0}^{M^k} w_j^k F \alpha^{k - M^k + j},
\end{equation} 
where the weight vector	$w^k = (w^k_0,\ldots,w^k_{M^k})$ is obtained through the AA algorithm that will be introduced in what follows. 

Let
\[
g^k := G (\alpha^k), \quad y^k := g^{k+1} - g^k, \quad s^k := \alpha^{k+1} - \alpha^k,
\]
and
\begin{equation}\label{ys}
Y_k := \begin{bmatrix}
y^{k - M^k} & \cdots & y^{k-1}
\end{bmatrix}, \: 
S_k := \begin{bmatrix}
s^{k - M^k} & \cdots & s^{k-1}
\end{bmatrix}.
\end{equation}
Note that $Y_k$ and $S_k$ are $|\mathcal{S}| |\mathcal{A}|$ by $M^k$ matrices.
We introduce a new variable $\xi^k = (\xi^k_0, \dots , \xi^k_{M^k-1})$ that is related to the original weight vector $w^k$ as follows:
\begin{equation} \label{eq:relation}
\begin{split}
w_0^k &:= \xi_0^k, \quad w_i^k := \xi_i^k - \xi_{i-1}^k, \quad 1 \leq i \leq M^k - 1   \\
w_{M^k}^k &:= 1- \xi_{M^k -1}^k.
\end{split}
\end{equation} 
Then, the optimization problem~\eqref{eq:anderson-problem} can be reformulated as the following least-squares problem:
\[
\min_{\xi^k} \; \| g^k - Y_k \xi^k \|_2^2.
\]
To alleviate the instability issue in the original AA algorithm, 
we use an additional $\ell_2$-regularization term scaled by the Frobenius norms of $S_k$ and $Y_k$, as proposed in~\cite{fu2019anderson}. 
Specifically, we consider the following modified optimization problem to compute the weights:
\begin{equation}\label{eq:a3}
\min_{\xi^k} \; \| g^k - Y_k  \xi^k \|_2^2 + \eta  ( 
	\| S_k \|_F^2 + \|Y_k \|_F^2
	) 
	\| \xi^k \|_2^2,
\end{equation}
where $\eta > 0$ is a regularization parameter.
Note that the regularization term is adaptive to $y^k$ and $s^k$ that are proxies for assessing how close the current estimate is to the fixed point. 
This regularized least-squares problem admits a closed-form optimal solution, given by
\begin{equation}
\label{eq:sol_xi}
\xi^k = (Y_k^\top Y_k + \eta ( \| S_k \|_F^2 + \|Y_k \|_F^2) I )^{-1} Y_k^\top g^k.
\end{equation}
The weight vector $w^k$ is then recovered by using the relation between $w^k$ and $\xi^k$ in \eqref{eq:relation}. After finding the weight vector, our accelerated FIB algorithm computes the next estimate $\alpha^{k+1}$ by substituting $w^k$ into \eqref{eq:raa-q}.

The overall algorithm, which we call AA-FIB, is presented in Algorithm~\ref{A3-FIB}.
The algorithm starts with selecting all the components of $\alpha^0 \in \mathbb{R}^{|\mathcal{S}||\mathcal{A}|}$ randomly from $[{r_{\min}}/{(1-\gamma)},{r_{\max}}/{(1-\gamma)}]$.
At iteration $k$, the FPI candidate $\alpha_{\text{FPI}}$ and the AA candidate $\alpha_{\text{AA}}$ are computed (lines 6--10). 
Then, the algorithm decides to select either the AA candidate or the FPI candidate in the \emph{safeguarding} step, as proposed in~\cite{fu2019anderson} (lines 13--21). 
It has been shown that similar ideas of safeguarding may improve the convergence and stability of AA \cite{zhang2018globally}.
The safeguarding step checks whether the current residual norm is sufficiently small. If true, the algorithm uses the AA estimate $\alpha_{\text{AA}}$ and the next $N_s-1$ iterations skip the safeguarding step. Otherwise, instead of using the AA estimate, the FPI estimate $\alpha_{\text{FPI}}$ is taken. 
This process is repeated until convergence, and the set $\Gamma$ of $\alpha$-vectors can be recovered from the converged $\alpha$.
Finally, the corresponding policy can be obtained as
\[
\pi(b) \in \argmax_{a \in \mathcal{A}} \, \{b^\top  \alpha_a  : \alpha_a \in \Gamma \}.
\]
In Section~\ref{sec:theory}, we will prove the convergence of the algorithm to the FIB solution $\alpha^\star$, which is a suboptimal solution of the POMDP.
This demonstrates that the use of AA does not degrade the quality of the solution.

\begin{algorithm}
	\begin{algorithmic}[1]
		\caption{AA-FIB}\label{A3-FIB}
		\STATE \textbf{Input}: Initial vector $\alpha^{0}$, memory size $\mmax$, regularization parameter $\eta$, safeguarding constants $D > 0$,  $\phi > 0$, $N_{s} \geq 1$;\\
		\STATE Initialize $\isafe :=$ True, $\naa := 0$, $\raa := 0$; \\
		\STATE Compute  $\alpha^{1} := F \alpha^{0}$, $g^{0} := G (\alpha^0)$;\\
 		\FOR{$k = 1,2, \dots $}
		\STATE Set $M^k := \min(\mmax, k)$; 
		\STATE Compute the FPI candidate and residual: $\alpha_{\text{FPI}} := F  \alpha^{k}$, 
		$g^{k} := G (\alpha^k)$;   
		\STATE \text{\# \bf Adaptive regularization}
		\STATE Solve the regularized least squares problem \eqref{eq:a3} or compute the closed-form solution~\eqref{eq:sol_xi};
		\STATE Compute $w^{k}$ from \eqref{eq:relation};
		\STATE Compute the AA candidate as $\alpha_{\text{AA}} := \sum_{i=0}^{M^k} w_{i}^{k}F  \alpha^{k-M^k+i}$;
		\STATE \text{\# \bf Safeguarding}
		\IF{$\isafe = \text{True}$ or $\raa \geq N_s$}
		\IF{$\|g^{k}\|_\infty \leq D\|g^{0}\|_\infty(\naa /N_s+1)^{-(1+\phi)}$}
		\STATE $\alpha^{k+1} := \alpha_{\text{AA}} $, $\naa := \naa+1$, \\$\isafe := \text{False}$, $\raa := 1$ ;
		\ELSE
		\STATE $\alpha^{k+1} := \alpha_{\text{FPI}}$, $\raa := 0$;
		\ENDIF
		\ELSE 
		\STATE $\alpha^{k+1} := \alpha_{\text{AA}}$, $\naa := \naa+1$,
		$\raa := \raa + 1$; 
		\ENDIF 
		\STATE Terminate if the stopping criterion is satisfied;
		\ENDFOR
		\STATE \textbf{return} $\alpha^{k+1}$;
	\end{algorithmic}
\end{algorithm}

\subsection{Simulation Method}\label{sec:app}

So far, we have considered the case where the POMDP model is fully known. 
However, when the model information is unavailable, i.e., $T$, $\Omega$, $r$ are unknown, it is impossible to  implement Algorithm~\ref{A3-FIB}.
In such cases, simulation-based methods can be used to obtain an approximate solution when a POMDP simulator is available~(e.g., \cite{Bertsekas2012}).

We adopt the setting in \cite{Silver2010}, where
the agent uses a simulator $\mathcal{G}$ as a \emph{generative} model of the POMDP. 
The simulator generates a sample of a successive state, observation and reward $\{(s_j', o_j, r_j)\}_{j \in J}$, given the current state and action pair, i.e., $(s_j', o_j, r_j) \sim \mathcal{G} (s, a)$.\footnote{Note that $s_j'$ is different from the actual next state, which is unobservable, since $s_j'$ is produced from the generative model. However, its sample has information about the transition probability.}
We consider the following simulation-based version of the Bellman operator~\eqref{app_bellman}:
\begin{equation}\label{sim_bellman}
\begin{split}
&(\hat{F}\alpha) ((a-1)|\ms| + s) \\
&:= \frac{1}{| J|} \bigg [ \sum_{j \in J} r_j + \gamma  \sum_{o \in \mathcal{O}} \max_{a'\in \mathcal{A}}\sum_{j \in J} \hat{\Omega}_{o}^{s_j', a}  \alpha_{a'} (s_j')\bigg ].
\end{split}
\end{equation}
Here, $\hat{\Omega}_{o}^{s_j', a}$ denotes the empirical distribution of observation given $(s_j', a)$, defined as
\[
\hat{\Omega}_{o}^{s', a} := \frac{1}{|J_{s'}|}\sum_{j \in J_{s'}} \bold{1}_{\{o_j = o\}},
\]
where $J_{s'}$ denotes the set of indices such that $s_j = s'$ for any $j \in J_{s'}$, and $\bold{1}_A$ is the indicator function with $\bold{1}_A = 1$ if $A$ is true and $\bold{1}_A = 0$ otherwise. 
Algorithm~\ref{A3-FIB} is then accordingly modified by replacing $F$ with $\hat{F}$, and $G$ with $\hat{G}$ defined as
\[
\hat{G}(\alpha) := \alpha - \hat{F} \alpha.
\]
Unlike the original AA-FIB algorithm, this simulation-based version is not guaranteed to converge. 
Nonetheless, we provide an error bound in the following subsection.

\subsection{Error Bound and Convergence}\label{sec:theory}
 
We now examine how the simulation errors  are propagated through the AA-FIB algorithm. 
Consider the following mismatch between $\hat{F}\alpha^{k}$ and $F \alpha^k$:
\[
e^k := \hat{F}\alpha^{k} - F \alpha^k,
\]
which can be interpreted as the error caused by the simulation.
Then, the FPI step in the simulation-based AA-FIB algorithm can be expressed as
\[
\alpha^{k+1} =  \hat{F}  \alpha^k = F  \alpha^k   + e^k.
\]
We show that the following error bound holds in terms of the residual function $G (\alpha^k) = \alpha^k - F \alpha^k$.

\begin{theorem}\label{thm:app}
Consider the simulation-based AA-FIB algorithm.
	Suppose there exists $\varepsilon \geq 0$ such that  $\|e^k\|_\infty \leq \varepsilon$ for all $k\geq 0$. 
	Then, we have
			\begin{equation} \nonumber
	\lim_{k\rightarrow \infty}\inf \|G(\alpha^k)\|_\infty \leq \frac{1+\gamma}{1-\gamma}\varepsilon.	
	\end{equation}
\end{theorem}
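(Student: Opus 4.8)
The plan is to reduce the claim to the behavior of the plain fixed-point iteration by exploiting that $F$ is a $\gamma$-contraction in $\|\cdot\|_\infty$, and then to split the analysis according to how often the safeguarding test accepts the Anderson candidate. The workhorse is a one-step residual recursion for an FPI update. I would write such a step as $\alpha^{k+1}=\hat F\alpha^k = F\alpha^k + e^k$ and expand the true residual at the new iterate,
\[
G(\alpha^{k+1}) = \alpha^{k+1}-F\alpha^{k+1} = (F\alpha^k - F^2\alpha^k) + e^k - \big(F(F\alpha^k + e^k)-F^2\alpha^k\big).
\]
Applying the contraction to the pair $(\alpha^k, F\alpha^k)$ bounds the first term by $\gamma\|G(\alpha^k)\|_\infty$, the assumption bounds the second by $\varepsilon$, and applying the contraction to the pair $(F\alpha^k+e^k, F\alpha^k)$ bounds the third by $\gamma\varepsilon$. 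The triangle inequality then yields the key recursion
\[
\|G(\alpha^{k+1})\|_\infty \le \gamma\|G(\alpha^k)\|_\infty + (1+\gamma)\varepsilon ,
\]
whose fixed point is exactly $\tfrac{1+\gamma}{1-\gamma}\varepsilon$; this is where the target constant originates.

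I would then record the elementary relation between the simulated residual $g^k=\hat G(\alpha^k)$ that the algorithm actually monitors and the true residual: since $\hat F\alpha^k = F\alpha^k+e^k$, we have $G(\alpha^k)=g^k+e^k$, hence $\|G(\alpha^k)\|_\infty \le \|g^k\|_\infty + \varepsilon$. This transfers any control on the safeguarded quantity to the quantity appearing in the statement.

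With these two ingredients, the argument splits according to how many times the safeguard test is passed. Since $\naa$ is incremented on every accelerated step, $\naa\to\infty$ as the algorithm runs, so the threshold $D\|g^{0}\|_\infty(\naa/N_s+1)^{-(1+\phi)}$ tends to $0$. In the first case the test is passed infinitely often; along that subsequence $\|g^k\|_\infty$ is dominated by the vanishing threshold, so $\liminf_{k\to\infty}\|g^k\|_\infty=0$, and the relation above gives $\liminf_{k\to\infty}\|G(\alpha^k)\|_\infty \le \varepsilon \le \tfrac{1+\gamma}{1-\gamma}\varepsilon$ (using $\tfrac{1+\gamma}{1-\gamma}\ge 1$). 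In the second case the test is passed only finitely often, so beyond a finite index the fallback FPI update governs the iteration; iterating the recursion then gives $\|G(\alpha^k)\|_\infty \le \gamma^{\,k-K}\|G(\alpha^K)\|_\infty + (1+\gamma)\varepsilon\sum_{i=0}^{k-K-1}\gamma^i$, whose limit is $\tfrac{1+\gamma}{1-\gamma}\varepsilon$. Combining the two cases yields the claimed bound.

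I expect the main obstacle to be the second case, namely justifying that ``the Anderson candidate is accepted only finitely often'' truly forces the contraction recursion to apply for all large $k$, despite the safeguarding bookkeeping ($\isafe$, $\naa$, and the $N_s$-step skipping in lines~12--20). The difficulty is that accepted skip-steps are interleaved with the fallback steps, so the recursion cannot naively be iterated over consecutive indices; one must argue from the decay of the threshold that every safeguard evaluation eventually fails and that the surviving updates form a genuine FPI tail (or otherwise control the intervening accelerated iterates). The contraction computation in the first paragraph is routine once the intermediate term $F(F\alpha^k+e^k)$ is introduced, so the delicate accounting lies entirely in tying the algorithm's acceptance logic to this dichotomy.
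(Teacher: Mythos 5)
Your two technical ingredients are correct and essentially mirror the paper's own proof. The transfer inequality $\|G(\alpha^k)\|_\infty \le \|g^k\|_\infty + \varepsilon$ combined with the vanishing safeguard threshold along accepted iterates is exactly how the paper handles the case of infinitely many AA acceptances, giving $\liminf_k\|G(\alpha^k)\|_\infty \le \varepsilon$. Your one-step residual recursion $\|G(\alpha^{k+1})\|_\infty \le \gamma\|G(\alpha^k)\|_\infty + (1+\gamma)\varepsilon$ is a correct and slightly more direct substitute for the paper's treatment of the finite case: the paper instead iterates $\|\alpha^{k+1}-\alpha^\star\|_\infty \le \gamma\|\alpha^k-\alpha^\star\|_\infty + \varepsilon$ to obtain $\liminf_k\|\alpha^k-\alpha^\star\|_\infty \le \varepsilon/(1-\gamma)$ and then converts to the residual via $\|G(\alpha^k)\|_\infty \le (1+\gamma)\|\alpha^k-\alpha^\star\|_\infty$; both routes produce the same constant $\tfrac{1+\gamma}{1-\gamma}\varepsilon$.

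The gap is exactly where you predicted, and the fix is to change the dichotomy rather than to do harder accounting. The paper splits on whether \emph{AA candidates are selected} (checked or unchecked) finitely or infinitely often, not on whether the safeguard \emph{test passes} finitely or infinitely often. With that split, the finite case has, by definition, a tail consisting only of FPI updates, so your recursion can be iterated over consecutive indices with nothing further to verify; in the infinite case, one needs only that infinitely many of the selected AA steps are checked acceptances, which follows because every maximal run of unchecked AA steps is initiated by a safeguard-passing step and has length at most $N_s$ (this is the step the paper compresses into the assertion that ``the set of $k_i$'s has infinitely many elements''), after which your Case A argument applies verbatim. Your split, by contrast, cannot be completed as stated from the pseudocode: by lines 12--20, after a failed check with $\isafe = \text{False}$ the counter is reset to $\raa := 0$, so the next $N_s$ iterations take \emph{unchecked} AA candidates; hence ``finitely many passes, but at least one'' still produces infinitely many AA steps interleaved with the FPI fallbacks, and there is no pure FPI tail over which to run your contraction argument. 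Keep both of your lemmas and simply adopt the paper's case split.
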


\begin{proof}
We first consider the case in which AA candidates are selected for finitely many times in Algorithm~\ref{A3-FIB}.	 
In this case, 
	 the algorithm reduces to the vanilla value iteration process after a finite number of iterations less than a bound, say $k$. 
Let $\alpha^\star$ denote the FIB solution, i.e., the fixed-point of $F$.
It follows from the $\gamma$-contractivity of $F$ that 
	\[
	\|F \alpha^{k} - \alpha^\star\|_\infty = 	\|F \alpha^{k} - F \alpha^\star \|_\infty \leq \gamma \|\alpha^{k}-\alpha^\star\|_\infty.
	\]
Then, we have	
\begin{equation}\label{ineq1}
\begin{split}
\| G(\alpha^{k}) \|_\infty &= \| \alpha^{k} - F \alpha^{k}  \|_\infty\\
&= \| \alpha^{k} - \alpha^\star + \alpha^\star   - F \alpha^{k} \|_\infty \\
&\leq (1+\gamma) \|\alpha^{k} - \alpha^\star \|_\infty.
\end{split}
\end{equation}	
Moreover, since	
	\begin{equation}\nonumber
	\|\hat{F}  \alpha^k  - F \alpha^k  \|_\infty = \|e^k\|_\infty \leq \varepsilon,
	\end{equation}	
we obtain
	\begin{equation} \label{eq:cont}
	\begin{split}
	\|\alpha^{k+1} - \alpha^{\star}\|_\infty 
	&\leq 
	\|\hat{F}\alpha^{k} - F\alpha^{k}\|_\infty + \|F\alpha^{k} - \alpha^\star\|_\infty
\\
	&\leq \varepsilon +   \gamma  \|\alpha^{k}-\alpha^\star\|_\infty.
	\end{split}
	\end{equation} 	
 Taking $\liminf$ on both sides yields
 \begin{equation}\label{ineq2}
 \liminf_{k \to \infty} \|\alpha^{k} - \alpha^{\star}\|_\infty 
 \leq \frac{\varepsilon}{1-\gamma}.
 \end{equation}
It follows from \eqref{ineq1} and \eqref{ineq2} that 
 \[
 \liminf_{k \to \infty} \| G(\alpha^{k}) \|_\infty \leq \frac{1+\gamma}{1-\gamma}\varepsilon.
 \]

Suppose now that AA candidates are selected for infinitely many times in Algorithm~\ref{A3-FIB}.	
Recall that
	${g}^k := \hat{G} (\alpha^k )$.
	Since $\|e^k\|_\infty \leq \varepsilon$,
	\begin{equation} \nonumber
	\begin{split}
	\| g^k - G(\alpha^k)\|_\infty &= \|\hat{G}(\alpha^k) - G(\alpha^k)\|_\infty \\
	&= \|\hat{F} \alpha^k - F\alpha^k\|_\infty \\
	&\leq \|e^k \|_\infty \leq \varepsilon.
	\end{split}
	\end{equation}
Let $k_i$ denote the initial iteration count for accepting an AA candidate.
 Then, the set of $k_i$'s has infinitely many elements, and we have
	\begin{equation} \label{eq:G}
	\begin{split}
	&\lim_{k\rightarrow \infty}\inf \|G(\alpha^k)\|_\infty\\
	 &\leq  \lim_{i\rightarrow \infty}\inf \|G(\alpha^{k_i})\|_\infty\\
	& \leq \lim_{i\rightarrow \infty}\inf (\|g^{k_i}\|_\infty + \| g^k - G(\alpha^k)\|_\infty ) \\
	& \leq \lim_{i\rightarrow \infty}\inf \|g^{k_i}\|_\infty + \varepsilon \\
	&\leq D\|g^0\|_\infty \lim_{i\rightarrow\infty}(i+1)^{-(1+\phi)} + \varepsilon\\
	& \leq \varepsilon \leq \frac{1+\gamma}{1-\gamma}\varepsilon,
	\end{split}
	\end{equation}
	where  we use the fact that $\|g^{k_i}\|_\infty \leq D\|g^{0}\|_\infty (i+1)^{-(1+\phi)}$ in the fourth inequality.
	\end{proof}

Note that the error bound in Theorem~\ref{thm:app} is linear in $\varepsilon$ and 
depends only on $\varepsilon$ and $\gamma$. 
This implies that AA has no explicit impact on the error bound since $\varepsilon$ is the error caused by simulations and $\gamma$ is the discount factor. 
From this observation, we  deduce the following convergence of the exact AA-FIB algorithm:
	
\begin{cor}
Consider the original AA-FIB algorithm with $e^k \equiv 0$ for all $k \geq 0$, assuming that the model information is available.  Then, we have 
\begin{equation}\nonumber
\alpha^k \to \alpha^\star \quad \mbox{as $k \to \infty$},
	\end{equation}
	where $\alpha^\star$ is the FIB solution, i.e., the fixed-point of $F$. 
\end{cor}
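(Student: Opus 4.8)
The plan is to obtain the corollary as a specialization of Theorem~\ref{thm:app} to the noise-free regime and then upgrade the resulting residual estimate to genuine convergence of the iterates. Setting $e^k \equiv 0$ makes $\hat F = F$ and $\hat G = G$, so $\varepsilon = 0$ is admissible in Theorem~\ref{thm:app}; the theorem then yields $\liminf_{k\to\infty}\|G(\alpha^k)\|_\infty \le 0$, i.e.\ $\liminf_{k\to\infty}\|G(\alpha^k)\|_\infty = 0$. To translate residual smallness into proximity to the fixed point, I would first record a two-sided relation between $\|G(\alpha^k)\|_\infty$ and $\|\alpha^k-\alpha^\star\|_\infty$. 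The upper direction $\|G(\alpha^k)\|_\infty \le (1+\gamma)\|\alpha^k-\alpha^\star\|_\infty$ is already inequality~\eqref{ineq1}; for the lower direction I would write $\alpha^k - \alpha^\star = G(\alpha^k) + (F\alpha^k - F\alpha^\star)$ and invoke the $\gamma$-contractivity of $F$ to get $(1-\gamma)\|\alpha^k - \alpha^\star\|_\infty \le \|G(\alpha^k)\|_\infty$. Thus the residual norm and the distance to $\alpha^\star$ are equivalent up to the constants $1-\gamma$ and $1+\gamma$.

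Next I would follow the same dichotomy used in the proof of Theorem~\ref{thm:app}. If AA candidates are accepted only finitely often, then past some iteration the recursion is exactly $\alpha^{k+1} = F\alpha^k$, and the Banach fixed-point theorem (using that $F$ is a $\gamma$-contraction in $\|\cdot\|_\infty$) gives $\alpha^k \to \alpha^\star$ outright. If AA candidates are accepted infinitely often, then along the subsequence $\{k_i\}$ of accepted AA iterations the safeguard test passes, so with $\varepsilon=0$ we have $\|G(\alpha^{k_i})\|_\infty = \|g^{k_i}\|_\infty \le D\|g^0\|_\infty (i+1)^{-(1+\phi)} \to 0$; the lower bound above then forces $\alpha^{k_i} \to \alpha^\star$ along this subsequence.

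The main obstacle is the final step: promoting convergence along the safeguard subsequence to convergence of the whole sequence, since AA updates are not contractions and can in principle enlarge $\|\alpha^k - \alpha^\star\|_\infty$. To close this gap I would exploit the periodic structure of the safeguarding loop: once the first AA candidate is accepted, $\isafe$ is permanently set to \texttt{False}, so the safeguard is re-examined whenever $\raa \ge N_s$, i.e.\ at most every $N_s+1$ iterations; hence every iterate $\alpha^k$ lies within a bounded number of updates of some $\alpha^{k_i}$. I would then control the intermediate iterates by writing, for any AA step, $\alpha^{k+1} - \alpha^\star = \sum_{j=0}^{M^k} w_j^k (F\alpha^{k-M^k+j} - F\alpha^\star)$ (using $\sum_j w_j^k = 1$), so that $\|\alpha^{k+1}-\alpha^\star\|_\infty \le \gamma\,\|w^k\|_1 \max_j \|\alpha^{k-M^k+j}-\alpha^\star\|_\infty$, while FPI steps only contract. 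The adaptive $\ell_2$-regularization in~\eqref{eq:a3} keeps $\|w^k\|_1$ controlled through the closed-form solution~\eqref{eq:sol_xi}. Combining the bounded block length, the controlled weights, and $\alpha^{k_i}\to\alpha^\star$ shows that the maximum over each finite block tends to $0$, yielding $\limsup_{k\to\infty}\|\alpha^k-\alpha^\star\|_\infty = 0$ and hence $\alpha^k \to \alpha^\star$. The delicate point to verify carefully is the uniform boundedness of $\|w^k\|_1$ near the fixed point, where both $g^k$ and the regularization scale $\|S_k\|_F^2 + \|Y_k\|_F^2$ vanish simultaneously.
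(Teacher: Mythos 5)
Your overall skeleton coincides with the paper's: the same dichotomy (AA candidates accepted finitely vs.\ infinitely often), the same treatment of the finite case via the contraction $\|\alpha^{k+1}-\alpha^\star\|_\infty \le \gamma\|\alpha^k-\alpha^\star\|_\infty$, the same lower bound $(1-\gamma)\|\alpha^k-\alpha^\star\|_\infty \le \|G(\alpha^k)\|_\infty$, and the same use of the safeguard threshold to conclude $\alpha^{k_i}\to\alpha^\star$ along the accepted AA iterations. Where you part ways with the paper is in passing from convergence along $\{k_i\}$ to convergence of the whole sequence, and there your proposal has genuine gaps.

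Concretely: (i) the claim that every iterate lies within a bounded number of updates of some \emph{accepted} iterate $\alpha^{k_i}$ is false for Algorithm~\ref{A3-FIB}. Once $\isafe$ is False, the safeguard is indeed re-examined every $N_s$ or $N_s+1$ iterations, but an examination can \emph{fail}; each failure produces one FPI step, resets $\raa$, and is followed by another block of unchecked AA steps before the next examination. Nothing prevents arbitrarily many consecutive failures between two acceptances, so $k_{i+1}-k_i$ is unbounded and your finite-block structure collapses. (ii) Even granting bounded blocks and a uniform bound $\|w^k\|_1\le C$, the block recursion is not self-contained: an AA step at iteration $k$ involves $\alpha^{k-M^k+j}$, reaching back up to $\mmax$ iterations and hence across block boundaries into iterates that are not known to be small. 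The window maximum $D_k := \max_{k-\mmax\le l\le k}\|\alpha^l-\alpha^\star\|_\infty$ then satisfies only $D_{k+1}\le \max(1,\gamma C)\,D_k$ under your hypotheses, which need not tend to zero merely because the isolated points $\alpha^{k_i}$ do; so ``the maximum over each finite block tends to $0$'' does not follow. (iii) The uniform bound on $\|w^k\|_1$, which you correctly flag as the delicate point, cannot be extracted from \eqref{eq:sol_xi} alone: the natural estimate is $\|\xi^k\|_2 \le \|g^k\|_2/\bigl(\eta\|Y_k\|_F\bigr)$, and both numerator and denominator vanish as the iterates approach the fixed point, so no a priori constant emerges. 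For comparison, the paper's proof uses none of this machinery: for $k_i\le k<k_{i+1}$ it asserts a per-step geometric decay $\|\alpha^k-\alpha^\star\|_\infty\le(1-\gamma)^{k-k_i}\|\alpha^{k_i}-\alpha^\star\|_\infty$ and combines it with the safeguard bound at $k_i$ --- a step that is itself terse about the unchecked AA iterations inside a block, but one that involves no bound on the AA weights. As written, your proposal does not establish the corollary; the ingredients it leaves unproven (bounded gaps between acceptances, uniform weight bounds, control of cross-block memory) are exactly the ones that would have to be supplied.
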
	
\begin{proof}
First, when AA candidates are selected for finitely many times, it follows from~\eqref{eq:cont} and $e^k = 0$ that
\[
\|\alpha^{k+1} - \alpha^{\star}\|_\infty \leq \gamma \|\alpha^{k} - \alpha^{\star}\|_\infty \quad \forall k \geq k'.
\]
This directly implies that $\alpha^{k+1} \to \alpha^\star$ as $k \to \infty$. 

Second, suppose that AA candidates are selected for infinitely many times, and let $k_i$ denote the initial iteration count for accepting an AA candidate.
Note that 
\begin{equation}\nonumber
\begin{split}
\| G(\alpha^k) \|_\infty &= \| \alpha^k - F\alpha^k \|_\infty \\
&=\| (\alpha^k - \alpha^\star) - (F \alpha^k - F\alpha^\star) \|_\infty \\
&\geq \|\alpha^k - \alpha^\star \| - \| F \alpha^k - F\alpha^\star \|_\infty\\
&\geq (1-\gamma) \|\alpha^k - \alpha^\star \|_\infty.
\end{split}
\end{equation}
Then, using the same logic as in \eqref{eq:G} with $e^k = 0$, we have
\[
(1-\gamma) \|\alpha^{k_i} - \alpha^\star \|_\infty \leq \| G(\alpha^{k_i}) \|_\infty \leq D \| g^0 \|_\infty  (i+1)^{-(1+\phi)}.
\]
For any $k$, there exists $i$ such that $k_i \leq k < k_{i+1}$ and we have
\begin{equation} \nonumber
\begin{split}
\| \alpha^k - \alpha^\star \|_\infty &\leq (1-\gamma)^{k - k_i} \| \alpha^{k_i} - \alpha^\star\|_\infty\\
& \leq (1-\gamma)^{k - k_i-1}  D \| g^0 \|_\infty  (i+1)^{-(1+\phi)}.
\end{split}
\end{equation}
As $k \to \infty$, we have $i \to \infty$, and thus $\alpha^k \to \alpha^\star$.
\end{proof}

The result of convergence to the FIB solution implies that AA does not degrade the quality of the solution. 
This observation is consistent with the results of our numerical experiments in  the following section. 

\section{Numerical Experiments}

 \begin{table*}[tb] 
	\caption{Performance comparisons for the exact AA-FIB algorithm on benchmark problems ($\mathrm{mean} \pm \mathrm{std}$).}
	\begin{center}
		\scalebox{0.89}{
			\begin{tabular}{|l|l|c|c|c|c|c|c|}
				\hline
				& & \multicolumn{6}{c|}{\textbf{Algorithm}}  \\  \hline
				\multirow{2}{*}{\vtop{\hbox{\strut \textbf{Problem}}\hbox{\strut \textbf{($|\mathcal{S}|, |\mathcal{A}|, |\mathcal{O}|$)}}}} & \multirow{2}{*}{\textbf{Metric}}  & \multirow{2}{*}{\textbf{FIB}} &  \multicolumn{4}{c|}{\textbf{AA-FIB (memory size $M_{\max}$)}} & \multirow{2}{*}{\textbf{SARSOP}}\\ 
				& & & $M_{\max}=4$ & $M_{\max}=8$ & $M_{\max}=12$ & $M_{\max}=16$ &\\\hline \hline
				
				\multirow{4}{*}{\vtop{\hbox{\strut cit}\hbox{\strut ($284,4,28$)}}} 
		    	&  $\#\mathrm{iter}$        & $1362.01\pm 17.78$     & $\bf{507.77 \pm 98.69}$      & $645.49 \pm 151.52$     & $614.83 \pm 145.75$     & $524.71 \pm 152.88$     & -\\ 
				&  $t_{\mathrm{total}}$(sec)   & $4.036 \pm 0.057$       & $\bf{1.813 \pm 0.351}$       & $2.372 \pm 0.555$       & $2.241 \pm 0.529$       & $1.954 \pm 0.568$       & $1.813$ \\ 
				&  $t_{\mathrm{AA}}$(sec)      & -                     & $0.047 \pm 0.010$       & $0.074 \pm 0.018$       & $0.074 \pm 0.018$       & $0.069 \pm 0.020$       & -\\ 
				&  $\mathrm{reward}_{\mathrm{rand}}$     & $0.44 \pm 0.05$         & $0.44 \pm 0.05$         & $0.44 \pm 0.05$         & $0.43 \pm 0.05$         & $0.44 \pm 0.05$         & $0.03 \pm 0.01$\\
				&  $\mathrm{reward}_{\mathrm{fixed}}$    & $0.81 \pm 0.01$         & $0.81 \pm 0.01$         & $0.81 \pm 0.01$         & $0.81 \pm 0.01$         & $0.81 \pm 0.01$         & $0.83 \pm 0.01$\\ \hline
				
				\multirow{4}{*}{\vtop{\hbox{\strut mit}\hbox{\strut ($204,4,28$)}}}
				&  $\#\mathrm{iter}$        & $1362.24 \pm 12.39$     & $391.84 \pm 83.56$      & $454.24 \pm 111.57$     & $396.68 \pm 107.60$     & $\bf{323.47 \pm 99.73}$      & -\\ 
				&  $t_{\mathrm{total}}$(sec)   & $4.021 \pm 0.054$       & $1.303 \pm 0.278$       & $1.583 \pm 0.390$       & $1.392 \pm 0.379$       & $\bf{1.144 \pm 0.355}$       & $1.144$ \\ 
				&  $t_{\mathrm{AA}}$(sec)      & -                     & $0.035 \pm 0.008$       & $0.050 \pm 0.012$       & $0.048 \pm 0.013$       & $0.040 \pm 0.012$       & -\\ 
				&  $\mathrm{reward}_{\mathrm{rand}}$     & $0.65 \pm 0.04$         & $0.64 \pm 0.04$         & $0.64 \pm 0.03$         & $0.65 \pm 0.34$         & $0.64 \pm 0.04$         & $0.14 \pm 0.04$\\
				& $\mathrm{reward}_{\mathrm{fixed}}$    & $0.86 \pm 0.02$         & $0.85 \pm 0.01$         & $0.86 \pm 0.01$         & $0.86 \pm 0.01$         & $0.86 \pm 0.12$         & $0.86 \pm 0.01$\\ \hline
				
				\multirow{4}{*}{\vtop{\hbox{\strut pentagon}\hbox{\strut ($212,4,28$)}}}
				&  $\#\mathrm{iter}$        & $1361.22 \pm 17.87$     & $354.39 \pm 83.82$      & $455.26 \pm 116.38$     & $412.32 \pm 125.75$     & $\bf{337.46 \pm 122.15}$     & -\\ 
				&  $t_{\mathrm{total}}$(sec)   & $3.797 \pm 0.052$       & $1.155 \pm 0.272$       & $1.495 \pm 0.381$       & $1.367 \pm 0.420$       & $\bf{1.122 \pm 0.408}$       & $1.122$ \\ 
				&  $t_{\mathrm{AA}}$(sec)      & -                     & $0.030 \pm 0.008$       & $0.047 \pm 0.012$       & $0.047 \pm 0.015$       & $0.040 \pm 0.015$       & -\\ 
				&  $\mathrm{reward}_{\mathrm{rand}}$     & $0.57 \pm 0.04$         & $0.58 \pm 0.05$         & $0.57 \pm 0.04$         & $0.58 \pm 0.04$         & $0.58 \pm 0.05$         & $0.05 \pm 0.02$\\
				&  $\mathrm{reward}_{\mathrm{fixed}}$    & $0.83 \pm 0.01$         & $0.83 \pm 0.02$         & $0.83 \pm 0.02$         & $0.83 \pm 0.02$         & $0.83 \pm 0.02$         & $0.77 \pm 0.02$\\ \hline
				
				\multirow{4}{*}{\vtop{\hbox{\strut sunysb}\hbox{\strut ($300,4,28$)}}}
				&  $\#\mathrm{iter}$        & $1364.77 \pm 11.13$     & $\bf{472.05 \pm 95.96}$      & $685.37 \pm 150.54$     & $616.37 \pm 144.80$     & $521.82 \pm 162.25$     & -\\ 
				&  $t_{\mathrm{total}}$(sec)   & $4.280 \pm 0.071$       & $\bf{1.676 \pm 0.341}$       & $2.465 \pm 0.540$       & $2.230 \pm 0.543$       & $1.899 \pm 0.592$       & $1.676$\\ 
				&  $t_{\mathrm{AA}}$(sec)      & -                     & $0.045 \pm 0.010$       & $0.079 \pm 0.018$       & $0.072 \pm 0.234$       & $0.069 \pm 0.022$       & -\\ 
				&  $\mathrm{reward}_{\mathrm{rand}}$     & $0.42 \pm 0.04$         & $0.43 \pm 0.04$         & $0.43 \pm 0.05$         & $0.43 \pm 0.04$         & $0.43 \pm 0.05$         & $0.09 \pm 0.03$\\
				&  $\mathrm{reward}_{\mathrm{fixed}}$   & $0.76 \pm 0.02$         & $0.76 \pm 0.02$         & $0.76 \pm 0.02$         & $0.76 \pm 0.02$         & $0.76 \pm 0.02$         & $0.72 \pm 0.02$\\ \hline
				
				\multirow{5}{*}{\vtop{\hbox{\strut fourth}\hbox{\strut ($1052,4,28$)}}}
				&  $\#\mathrm{iter}$        & $1364.02 \pm 14.57$     & $\bf{672.01 \pm 46.82}$      & $1002.76 \pm 129.27$    & $1142.27 \pm 9.68$      & $1213.37 \pm 327.84$    & -\\ 
				&  $t_{\mathrm{total}}$(sec)   & $6.241 \pm 0.092$       & $\bf{3.419 \pm 0.237}$       & $5.164 \pm 0.663$       & $5.960 \pm 1.003$       & $6.396 \pm 1.747$       & $3.419$\\ 
				&  $t_{\mathrm{AA}}$(sec)     & -                     & $0.084 \pm 0.006$       & $0.147 \pm 0.019$       & $0.199 \pm 0.034$       & $0.243 \pm 0.067$       & -\\ 
				& $\mathrm{reward}_{\mathrm{rand}}$    & $0.39 \pm 0.03$         & $0.40 \pm 0.03$         & $0.40 \pm 0.03$         & $0.39 \pm 0.03$         & $0.40 \pm 0.03$         & $0.06 \pm 0.02$\\
				&  $\mathrm{reward}_{\mathrm{fixed}}$    & $0.59 \pm 0.01$         & $0.59 \pm 0.01$         & $0.59 \pm 0.01$         & $0.59 \pm 0.01$         & $0.59 \pm 0.01$         & $0.50 \pm 0.02$\\ \hline
				
				\multirow{4}{*}{\vtop{\hbox{\strut Tag}\hbox{\strut ($870,5,30$)}}}
				&  $\#\mathrm{iter}$        & $315.52 \pm 0.52$       & $100.12 \pm 8.65$       & $95.93 \pm 11.52$       & $86.26 \pm 9.68$        & $\bf{85.03 \pm 8.29}$        & -\\ 
				&  $t_{\mathrm{total}}$(sec)   & $1.744 \pm 0.011$       & $0.575 \pm 0.051$       & $0.571 \pm 0.072$       & $0.518 \pm 0.063$       & $\bf{0.514 \pm 0.054}$       & $0.514$\\ 
				&  $t_{\mathrm{AA}}$(sec)     & -                     & $0.012 \pm 0.001$       & $0.014 \pm 0.002$       & $0.014 \pm 0.002$       & $0.016 \pm 0.002$       & -\\ 
				&  $\mathrm{reward}_{\mathrm{rand}}$    & $-15.89 \pm 0.88$       & $-15.87 \pm 0.93$       & $-15.85 \pm 0.99$       & $-15.88 \pm 0.94$       & $-15.89 \pm 0.95$       & $-5.87 \pm 0.57$\\
				&  $\mathrm{reward}_{\mathrm{fixed}}$    & $-17.35 \pm 0.70$       & $-17.36 \pm 0.77$       & $-17.32 \pm 0.71$       & $-17.31 \pm 0.71$       & $-17.28 \pm 0.63$       & $-6.12 \pm 0.66$\\ \hline
				
				\multirow{4}{*}{\vtop{\hbox{\strut Underwater}\hbox{\strut ($2653,6,102$)}}}
				&  $\#\mathrm{iter}$        & $445.96 \pm 4.96$       & $\bf{145.40 \pm 5.72}$       & $182.77 \pm 7.11$       & $212.07 \pm 11.11$      & $234.88 \pm 13.40$      & -\\ 
				&  $t_{\mathrm{total}}$(sec)   & $16.965 \pm 0.344$      & $\bf{5.654 \pm 0.242}$       & $7.117 \pm 0.291$       & $8.286 \pm 0.447$       & $9.286 \pm 0.545$       & $5.654$\\ 
				&  $t_{\mathrm{AA}}$(sec)      & -                     & $0.033 \pm 0.004$       & $0.057 \pm 0.007$       & $0.088 \pm 0.009$       & $0.132 \pm 0.015$       & -\\ 
			    &  $\mathrm{reward}_{\mathrm{rand}}$     & $3298.38 \pm 246.15$    & $3329.90 \pm 262.39$    & $3277.46 \pm 258.17$    & $3261.39 \pm 230.99$    & $3300.93 \pm 240.30$    & $3301.69 \pm 266.82$\\
				&  $\mathrm{reward}_{\mathrm{fixed}}$    & $-39.86 \pm 43.86$      & $-37.88 \pm 47.32$      & $-50.83 \pm 44.61$      & $-48.07 \pm 50.10$      & $-47.36 \pm 41.97$      & $706.76 \pm 7.81$\\ \hline
		\end{tabular}
		}
		\label{tab:a3-fib-pomdp}
	\end{center}
	\vspace{-0.1in}
\end{table*}

\subsection{Experimental Setup}
To evaluate the performance of the proposed POMDP algorithm, we  consider robotic navigation problems---\emph{cit, mit, pentagon sunysb, fourth}---available online\footnote{\label{footnote}https://www.pomdp.org/examples/}.
In each test scenario, the robot departs from the start state, and must reach the goal state.
The robot wants to maximize its total discounted reward while executing the policy given to reach the goal state. The rewards are specified as zero for all state-action pairs except when the robot declares itself to be in the goal state.
If the robot's state is also the goal state, it receives $1$ as a reward. Otherwise, $-1$ is given as a penalty. 
The robot has three actions that move it in one of the three possible directions, plus a fourth \emph{declare} action. The robot may fail to execute an action with some transition probability specified for each navigation problem. The robot detects four observations: a wall, a door, an open space, or \emph{undetermined} from neighboring states. Moreover, the observations are subject to sensor errors, which are modeled as the observation probability.
The transition and observation probabilities are given by each problem.

Two additional benchmark problems, namely \emph{Tag, Underwater}, are selected for performance evaluation.\footnote{https://bigbird.comp.nus.edu.sg/pmwiki/farm/appl/}
\emph{Underwater}  is a  navigation problem in which a  robot seeks to reach a goal state. However, unlike maze-like settings, \emph{Underwater} is a $51 \times 52$ grid map.
Moreover, it has a fixed set of available initial states distributed along the left side of the grid world.
In \emph{Underwater}, the robot can localize itself only in some specific regions. Thus, it requires the robot to reach such regions first and then re-plan to get to the destination.
 
Lastly, \emph{Tag}  is a problem of catching a moving target, so the environment is highly dynamic.
This makes \emph{Tag} significantly different from navigation problems, where the environments are static and known in advance.

For each of the test scenarios, we compare our AA-FIB algorithm with FIB and SARSOP~\cite{kurniawati2008sarsop}.
For a fair comparison, the  time limit for SARSOP is chosen as the minimum computation time required for AA-FIB.

The benchmark problems are solved with 100 different initial $\alpha_0$'s with all their components are randomly selected within $[{r_{\min}}/{(1-\gamma)},{r_{\max}}/{(1-\gamma)}]$.
After convergence, the discounted cumulative rewards are obtained by running 100 episodes of maximum trajectory length 100.
To compare the robustness and generality of the obtained policies, two types of initial beliefs are used when rolling out the policies: a fixed initial belief given for each problem and a randomly selected belief.
 All the experiments were conducted using Python 3.7.4 on a PC with Intel Core i7-8700K at 3.70GHz. The source code of our AA-FIB implementation is available online.\footnote{https://github.com/POMDP-core/AA-FIB}

\begin{table*} 
	\caption{Performance comparisons for the simulation-based AA-FIB algorithm on benchmark problems ($\mathrm{mean} \pm \mathrm{std}$).}
	\begin{center}
		\scalebox{0.89}{%
			\begin{tabular}{|l|l|c|c|c|c|c|}
				\hline
				& & \multicolumn{5}{c|}{\textbf{Algorithm}}  \\  \hline
				\multirow{2}{*}{\textbf{Problem ($|\mathcal{S}|, |\mathcal{A}|, |\mathcal{O}|$)} } & \multirow{2}{*}{\textbf{Metric}}  & \multirow{2}{*}{\textbf{FIB}} &  \multicolumn{4}{c|}{\textbf{Simulation-based AA-FIB (memory size $M_{\max}$)}}\\ 
				& & & $M_{\max} = 4$ & $M_{\max}=8$ & $M_{\max}=12$ & $M_{\max}=16$\\\hline\hline
				\multirow{4}{*}{cit ($284,4,28$)}
				&  $\#\mathrm{iter}$        & $1362.01 \pm 17.78$ & $\bf{536.34 \pm 75.48}$  & $708.46 \pm 135.67$ & $736.91 \pm 156.81$ & $576.04 \pm 161.51$ \\ 
				&  $t_{\mathrm{total}}$(sec)   & $4.036 \pm 0.057$   & $\bf{2.007 \pm 0.280}$   & $2.682 \pm 0.516$   & $2.669 \pm 0.553$   & $2.200 \pm 0.615$ \\ 
				&  $t_{\mathrm{AA}}$(sec)      & -                 & $0.049 \pm 0.007$   & $0.079 \pm 0.015$   & $0.085 \pm 0.018$   & $0.073 \pm 0.021$ \\ 
				&  $\mathrm{reward}_{\mathrm{rand}}$    & $0.44 \pm 0.05$     & $0.44 \pm 0.05$     & $0.43 \pm 0.05$     & $0.42 \pm 0.05$     & $0.44 \pm 0.04$\\
				&  $\mathrm{reward}_{\mathrm{fixed}}$    & $0.81 \pm 0.01$     & $0.82 \pm 0.01$     & $0.82 \pm 0.01$     & $0.81 \pm 0.02$     & $0.82 \pm 0.01$\\
				 \hline

				\multirow{4}{*}{mit ($204,4,28$)}
				&  $\#\mathrm{iter}$        & $1362.24 \pm 12.39$ & $426.22 \pm 80.09$  & $494.33 \pm 113.98$     & $422.30 \pm 121.91$     & $\bf{360.95 \pm 134.50}$ \\ 
				&  $t_{\mathrm{total}}$(sec)   & $4.021 \pm 0.054$   & $1.545 \pm 0.294$   &  $1.808 \pm 0.422$       &  $1.553 \pm 0.448$       & $\bf{1.330 \pm 0.500}$ \\ 
				&  $t_{\mathrm{AA}}$(sec)      & -                 & $0.039 \pm 0.008$   & $0.054 \pm 0.013$       & $0.050 \pm 0.014$       & $0.044 \pm 0.016$ \\ 
				&  $\mathrm{reward}_{\mathrm{rand}}$     & $0.65 \pm 0.04$     & $0.61 \pm 0.03$     & $0.61 \pm 0.04$         & $0.64 \pm 0.04$         & $0.64 \pm 0.04$\\
				&  $\mathrm{reward}_{\mathrm{fixed}}$    & $0.86 \pm 0.02$     & $0.87 \pm 0.01$     & $0.87 \pm 0.01$         & $0.87 \pm 0.01$         & $0.87 \pm 0.01$\\
 \hline
				
				\multirow{4}{*}{pentagon ($212,4,28$)}
				&  $\#\mathrm{iter}$        & $1361.22 \pm 17.87$ & $380.99 \pm 83.19$  & $458.72 \pm 113.56$ & $418.62 \pm 113.21$ & $\bf{361.29 \pm 129.22}$ \\ 
				&  $t_{\mathrm{total}}$(sec)   & $3.797 \pm 0.052$   & $1.396 \pm 0.304$   & $1.701 \pm 0.419$   & $1.557 \pm 0.424$   & $\bf{1.350 \pm 0.486}$ \\ 
				&  $t_{\mathrm{AA}}$(sec)     & -                 & $0.035 \pm 0.009$   & $0.051 \pm 0.01$    & $0.051 \pm 0.014$   & $0.045 \pm 0.016$ \\ 
				&  $\mathrm{reward}_{\mathrm{rand}}$     & $0.57 \pm 0.04$     & $0.58 \pm 0.05$     & $0.58 \pm 0.05$     & $0.57 \pm 0.05$     & $0.59 \pm 0.04$\\
				&  $\mathrm{reward}_{\mathrm{fixed}}$   & $0.83 \pm 0.01$     & $0.79 \pm 0.02$     & $0.80 \pm 0.01$     & $0.79 \pm 0.01$     & $0.80 \pm 0.01$\\
 \hline
				
				\multirow{4}{*}{sunysb ($300,4,28$)}
				&  $\#\mathrm{iter}$       & $1364.77 \pm 11.13$ & $\bf{532.36 \pm 85.47}$  & $738.55 \pm 175.03$ & $650.99 \pm 155.24$ & $574.13 \pm 170.16$ \\ 
				&  $t_{\mathrm{total}}$(sec)   & $4.280 \pm 0.071$   & $\bf{1.964 \pm 0.318}$   & $2.751 \pm 0.648$   & $2.440 \pm 0.580$   & $2.164 \pm 0.643$ \\ 
				&  $t_{\mathrm{AA}}$(sec)     & -                 & $0.048 \pm 0.008$   & $0.082 \pm 0.020$   & $0.076 \pm 0.018$   & $0.073 \pm 0.022$ \\ 
				&  $\mathrm{reward}_{\mathrm{rand}}$     & $0.42 \pm 0.04$     & $0.44 \pm 0.05$     & $0.42 \pm 0.04$     & $0.43 \pm 0.05$     & $0.43 \pm 0.05$\\
				& $\mathrm{reward}_{\mathrm{fixed}}$    & $0.76 \pm 0.02$     & $0.76 \pm 0.02$     & $0.76 \pm 0.01$     & $0.76 \pm 0.01$     & $0.77 \pm 0.01$\\
				 \hline
				
				\multirow{4}{*}{fourth ($1052,4,28$)}
				&  $\#\mathrm{iter}$        & $1364.02 \pm 14.57$ & $\bf{667.37 \pm 49.14}$  & $1018.84 \pm 132.36$    & $1197.66 \pm 211.04$    & $1249.08 \pm 270.66$ \\ 
				&  $t_{\mathrm{total}}$(sec)   & $6.241 \pm 0.092$   & $\bf{3.309 \pm 0.243}$   & $5.021 \pm 0.667$       & $5.982 \pm 1.057$       & $6.389 \pm 1.395$ \\ 
				&  $t_{\mathrm{AA}}$(sec)     & -                 & $0.082 \pm 0.006$   & $0.145 \pm 0.019$       & $0.200 \pm 0.035$       & $0.239 \pm 0.052$ \\ 
				&  $\mathrm{reward}_{\mathrm{rand}}$     & $0.39 \pm 0.03$     & $0.40 \pm 0.04$     & $0.39 \pm 0.03$         & $0.38 \pm 0.03$         & $0.40 \pm 0.04$\\
				&  $\mathrm{reward}_{\mathrm{fixed}}$    & $0.59 \pm 0.01$     & $0.59 \pm 0.01$     & $0.59 \pm 0.01$         & $0.60 \pm 0.01$         & $0.59 \pm 0.01$\\
				 \hline
				
				   \multirow{4}{*}{Tag ($870,5,30$)}
				&  $\#\mathrm{iter}$        & $315.52 \pm 0.52$ & $100.33 \pm 8.66$  & $95.85 \pm 11.04$    & $89.87 \pm 10.47$    & $\bf{83.44 \pm 5.75}$ \\ 
				&  $t_{\mathrm{total}}$(sec)   & $1.744 \pm 0.011$   & $0.614 \pm 0.064$   & $0.588 \pm 0.074$       & $0.554 \pm 0.069$       & $\bf{0.523 \pm 0.057}$ \\ 
				&  $t_{\mathrm{AA}}$(sec)     & -                 & $0.013 \pm 0.002$    & $0.013 \pm 0.002$    & $0.015 \pm 0.003$             & $0.016 \pm 0.001$ \\ 
				&  $\mathrm{reward}_{\mathrm{rand}}$     & $-15.89 \pm 0.88$     & $-15.54 \pm 0.99$     & $-15.31 \pm 0.91$         & $-15.34 \pm 0.82$         & $-15.37 \pm 0.95$\\
				&  $\mathrm{reward}_{\mathrm{fixed}}$    & $-17.35 \pm 0.70$     & $-16.49 \pm 0.76$     & $-16.23 \pm 0.63$         & $-16.46 \pm 0.67$         & $-16.46 \pm 0.78$\\
				 \hline
				 
				\multirow{4}{*}{Underwater ($2653,6,102$)}
				&  $\#\mathrm{iter}$        & $445.96 \pm 4.96$ & $\bf{144.90 \pm 6.14}$  & $182.14 \pm 7.96$    & $211.69 \pm 12.27$    & $233.22 \pm 14.88$ \\ 
				&  $t_{\mathrm{total}}$(sec)   & $16.965 \pm 0.344$   & $\bf{5.361 \pm 0.239}$   & $6.783 \pm 0.307$       & $7.956 \pm 0.487$       & $8.791 \pm 0.577$ \\ 
				&  $t_{\mathrm{AA}}$(sec)     & -                 & $0.031 \pm 0.003$   & $0.056 \pm 0.007$       & $0.089 \pm 0.011$       & $0.125 \pm 0.015$ \\ 
				&  $\mathrm{reward}_{\mathrm{rand}}$     & $3298.38 \pm 246.15$     & $3323.65 \pm 220.35$     & $3265.26 \pm 259.83$         & $3282.56 \pm 236.36$         & $3264.56 \pm 264.68$\\
				&  $\mathrm{reward}_{\mathrm{fixed}}$    & $-39.86 \pm 43.86$     & $-46.53 \pm 48.75$     & $-49.80 \pm 45.96$         & $-37.53 \pm 42.63$         & $-42.36 \pm 44.20$\\
				 \hline

		\end{tabular}
		}
		\label{tab:approx-a3-fib}
	\end{center}
			\vspace{-0.05in}
\end{table*}

\subsection{Model-Based Method}

Table \ref{tab:a3-fib-pomdp} shows the experiment results for the model-based version of our algorithm together with
FIB and SARSOP.
The computed policies are evaluated  according to the following metrics:
\begin{itemize}
	\item $\#\mathrm{iter}$: total number of iterations for convergence;
	\item $t_{\mathrm{total}}$(sec): total computation time for convergence;
	\item $t_{\mathrm{AA}}$(sec): total computation time for computing weights in the AA algorithm; 
	\item $\mathrm{reward}_{\mathrm{rand}}$: discounted cumulative reward obtained from randomly selected initial beliefs;
	\item $\mathrm{reward}_{\mathrm{fixed}}$: discounted cumulative reward obtained from a fixed initial belief.
\end{itemize}

As shown in Table \ref{tab:a3-fib-pomdp}, 
AA-FIB significantly reduces the total number of iterations and the total computation time in all benchmark problems compared to FIB.
This result indicates that AA improves the convergence speed of the FIB algorithm.
To compare the quality of the resulting policies, we also provide the results of the total discounted cumulative reward. The results show that the policy obtained by AA-FIB performs as well as that obtained by its standard counterpart. 
This observation confirms that AA does not degrade the quality of the solution.
We further examine the effects of  memory size in AA-FIB. 
For this purpose, we run the AA-FIB algorithm using different maximum memory sizes $\mmax = 4, 8, 12, 16$.\footnote{The regularization parameter was optimized for each memory size.}
The result indicates that  optimizing the memory size can notably speed up the convergence of our algorithm.
Overall, we can conclude that our method improves the convergence of FIB  without degrading the quality of the solution across all memory sizes on all benchmark problems.

Regarding the comparison with SARSOP, recall that its time limit is set as the minimum computation time for AA-FIB. 
Given the time limit, AA-FIB attains higher rewards compared to SARSOP on all the benchmark problems except \emph{Tag} and \emph{Underwater}.
 In particular, under randomly selected beliefs, 
 AA-FIB significantly outperforms SARSOP on the first five benchmark problems.

\begin{table*} 
	\caption{Effect of sample size on the performance of the simulation-based AA-FIB algorithm ($\mathrm{mean} \pm \mathrm{std}$).}
	\begin{center}
		\scalebox{1}{%
			\begin{tabular}{|l|c|c|c|c|c|c|}
				\hline
{\textbf{Sample size}} & {2} &{4} &  {6} &  {8} &  {10} &  {20} \\  \hline \hline

Error in policy ($\%$) & $14.81 \pm 23.86$ & $1.44 \pm 0.52$ & $1.20 \pm 0.42$ & $1.05 \pm 0.39$ & $0.98 \pm 0.35$ & $0.81 \pm 0.31$ 
\\ \hline

$\mathrm{reward}_{\mathrm{rand}}$ & $0.318 \pm 0.103$ & $0.396 \pm 0.055$ & $0.406 \pm 0.048$ &  $0.411 \pm 0.052$ & $0.420 \pm 0.050$ & $0.424 \pm 0.041$  
\\ \hline

$\mathrm{reward}_{\mathrm{fixed}}$ & $0.645 \pm 0.210$ & $0.724 \pm 0.042$ & $0.735 \pm 0.039$ &  $0.745 \pm 0.035$ & $0.734 \pm 0.037$ & 
$0.746 \pm 0.039$
\\ \hline

$\#\mathrm{iter}$ &  $611.88 \pm 147.23$ & $580.30 \pm 85.66$ & $539.42 \pm 100.19$ & $552.53 \pm 86.05$ & $540.31 \pm 81.29$ & $516.49 \pm 89.53$
\\ \hline

$t_{\mathrm{total}}$(sec) & $2.243 \pm 0.536$ & $2.148 \pm 0.318$ & $2.020 \pm 0.376$ & $2.079 \pm 0.324$ &  $2.040 \pm 0.307$ & $1.984 \pm 0.342$
\\ \hline

		\end{tabular}
		}
		\label{tab:samplesize}
	\end{center}
\end{table*}

\subsection{Simulation-Based Method}

Table~\ref{tab:approx-a3-fib} presents the experiment results for the simulation-based AA-FIB algorithm with a sample size of $| J | = 20$. 
As in the case of the exact version, 
our simulation-based method significantly outperforms the FIB algorithm in terms of the total number of iterations and computation time. 
Furthermore, the cumulative reward obtained by our simulation method is similar to that obtained by FIB. 
In fact, by comparing Tables~\ref{tab:a3-fib-pomdp} and \ref{tab:approx-a3-fib}, we observe that the performances of the exact version and the simulation version of our algorithm are comparable in terms of all the metrics, including the total computation time and the cumulative reward. 

Table.~\ref{tab:samplesize} shows the effect of sample size $| J |$.
First, the percentage error between the FIB solution $\alpha^\star$ and the solution obtained by our simulation method decreases with sample size. 
In particular, the error is very small even when $4$ sample data are used, and it is less than $1$\% with sample size no less than $10$.  
Likewise, 
the cumulative reward increases and then saturates with sample size bigger than $4$. 
As an interesting observation, the number of iterations decreases with sample size. We conjecture that this is because an accurate approximation of $F$ enhances the convergence speed of AA. 
As a result, the total computation time also decreases as more sample data are used.

\section{Conclusions and Future Work}

We proposed an accelerated POMDP algorithm carefully combining AA and FIB to attain their salient features. 
Our theoretical analyses showed the convergence of our algorithm to the FIB solution and
 identified a provable error bound for a simulation-based implementation. 
 The results of our experiments  confirm that the AA-FIB algorithm outperforms its standard counterpart in terms of both the total number of iterations and total computation time. Moreover, we confirm that the use of AA does not degrade the  quality of the solution. 
 The proposed method can be extended in several interesting ways such as $(i)$ improving scalability using neural networks as a function approximator, and $(ii)$ accelerating partially observable reinforcement learning.

\bibliographystyle{IEEEtran}
\bibliography{reference} 

\end{document}